\documentclass[10pt,journal,twocolumn,twoside]{IEEEtran} % formal style

\usepackage{graphicx}
\usepackage{epstopdf}
\usepackage{float}
\usepackage{array}
\usepackage{amsmath}
\usepackage{amssymb}
\usepackage{mdwmath}
\usepackage{mdwtab}
\usepackage{eqparbox}
\usepackage{stfloats}
\usepackage{fixltx2e}
\usepackage{cases} 

\usepackage{flushend}

\usepackage{xcolor}
\usepackage{makecell}
\usepackage{algpseudocode}
\usepackage[boxed,ruled,commentsnumbered]{algorithm2e}
\usepackage{url}
\usepackage{cite}
%==============================================%
\ifCLASSOPTIONcompsoc
\usepackage[caption=false,font=normalsize,labelfont=sf,textfont=sf]{subfig}
\else
\usepackage[caption=false,font=footnotesize]{subfig}
\fi
%==============================================%
\allowdisplaybreaks[4]

\newtheorem{corollary}{\bf Corollary}

\newtheorem{proof}{Proof}

% Define my mathematic operators

 % Exponential integral

 % Used for double column equation

\makeatletter
\renewcommand*{\@opargbegintheorem}[3]{\trivlist
      \item[\hskip \labelsep{\bfseries #1\ #2}] \textbf{(#3):}\ }
\makeatother

%%%%%%%%%%%%%%%%%%%%%%%%%%%%%%%%%%%%%%%%%%%%%%%%%%%%%%%%%%%%
\begin{document}

\title
{Beamspace Modulation for Near Field Capacity Improvement in XL-MIMO Communications}
\author{ Shuaishuai Guo,~\IEEEmembership{Senior Member, IEEE}, and Kaiqian Qu,~\IEEEmembership{Student Member, IEEE}
\thanks{
Shuaishuai Guo and Kaiqian Qu are  with School of Control Science and Engineering, Shandong University, Jinan 250061, China and also with Shandong Provincial Key Laboratory of Wireless Communication Technologies (e-mail: shuaishuai\textunderscore guo@sdu.edu.cn, qukaiqian@mail.sdu.edu.cn).}
   }
\maketitle

%\newpage

\begin{abstract}
The spatial degrees of freedom (DoFs) greatly increase in the near-field region of millimeter wave or terahertz multiple-input multiple-output communications with extremely large antenna arrays (XL-MIMO). To employ the increased spatial DoFs,  a beamspace modulation (BM) strategy is introduced to the near field of XL-MIMO. BM can work  with a fixed small number of RF chains. It  exploits the increased spatial DoFs as modulation resources for capacity improvements. The achievable spectral efficiency and its asymptotic capacity are analyzed. Both theoretical and simulation results show that the proposed BM strategy considerably outperforms the existing benchmark that only selects the best beamspace for data transmission in terms of spectral efficiency.
\end{abstract}

\begin{IEEEkeywords}
Beamspace modulation, XL-MIMO, near field  communications
\end{IEEEkeywords}

\section{Introduction} 
\IEEEPARstart{T}{he} sixth generation (6G) mobile communications networks are envisioned to provide a more than ten-fold increase in spectral efficiency compared to the fifth generation (5G) communication networks. To achieve this goal, it is expected more and more antennas are equipped at the transceivers. Compared with the massive multiple-input multiple-output (MIMO) widely used in 5G, the number of antennas is becoming extremely large, known as extremely large MIMO (XL-MIMO), to support communications at high-frequency band, e.g., the millimeter wave (mmWave) frequency band or the terahertz frequency band \cite{dang2020should}. 

A sharp increase in antennas leads to a huge change not only in hardware but also in electromagnetic (EM) radiation. Specifically, the EM radiation is generally divided into near field and far field regions. The near field region grows large as the carrier frequency and array aperture  become large. The distance between the region boundary to the transmitter is called Fraunhofer distance\cite{Selvan2017FraunhoferAF}.
The distance is proportional to the product of the carrier frequency %$f_c$
and the square of the array aperture. 
With mmWave/terahertz frequency band and XL-MIMO being applied, most communications may operate in the near field region. This is different from existing communications that operate in far field. In the far field, the EM radiation is modeled as planner waves. While in the near field, the EM radiation has to be modeled more accurately as spherical waves. 

Recently, research on near field communications has started to become a focus and seems to promise to be a key point for further improvements in communication performance \cite{Cui2022NearFieldCF,Cui2021ChannelEF,Han2019ChannelEF,Cui2021NearFieldCE,Wei2022ChannelEF,Lu2022NearFieldMA,Wu2022MultipleAF,Zhang2022BeamFF}. In \cite{Cui2022NearFieldCF},  the authors  identified challenges and possible future research directions for near field communications. Accurate channel estimation is necessary to enable large-scale antennas to achieve significant gain.  \cite{Han2019ChannelEF,Cui2021ChannelEF,Cui2021NearFieldCE,Wei2022ChannelEF} designed codebooks for near field channel estimation utilizing spherical wavefronts. In \cite{Lu2022NearFieldMA}, the authors established the projected aperture non-uniform spherical wave (PNUSW) model and found that the multi-user interference (MUI) for XL-MIMO communications can be suppressed not only by angle separation but also by distance separation. Similarly to \cite{Lu2022NearFieldMA}, Wu et al. proposed the concept of location division multiple access (LDMA) for near field communication to provide a new possibility to enhance spectrum efficiency \cite{Wu2022MultipleAF}. In addition, 
\cite{Zhang2022BeamFF} encapsulated near field characteristics in steering vectors and studied the potential of beam focusing in near field multi-user MIMO communication scenarios with different antenna configurations.

 As the propagated signals in different angles and distances are separable, the number of paths greatly increases. This further can be considered as the increased degrees of freedom (DoFs). How to employ the increased DoFs from the transition from far field region to near field region arouses our interest. The idea that comes first and naturally is to exploit the spatial multiplexing gain by multi-data-stream transmission.
This requires the transmitter to equip with multiple radio frequency chains. However, the RF chains are costly and power-hungry. Recently, \cite{Wu2022} proposed distance-aware precoding for XL-MIMO. Different from previous hybrid precoding technology \cite{7397861} that activates a fixed number of RF chains, the precoding scheme proposed in \cite{Wu2022} dynamically activates the number of RF chains. When the communications distances is decreasing, more RF chains are selected to fully exploit the increasing spatial DoFs. They have shown that such a method can greatly enhance the spectral efficiency and their energy efficiency is comparable to a hybrid precoding scheme.
 
% Another work that tried to reduce the number of RF chains is widely-spaced multi-subarray (WSMS) precoding. Sub-arrays are widely spaced to enlarge the array aperture, artificially creating the expansion of the near-field region. The number of antennas in sub-arrays and their placement should be considered.  To this end, the EM radiation of each sub-array is modeled as planner wave. Combining all subarrays together can similarly operate like a spherical wave propagation. In this way, the precoding approach can significantly improve spectral efficiency. 

 The solution proposed in\cite{Wu2022} selects the best beamspace, whose dimension is equal to the number of activated RF chains, for data transmission.  Even though the RF chains can be dynamically chosen to be activated, the transceiver has to equip with many RF chains as candidates. %Are there any other ways for digging the increased spatial DoFs? How can we employ the increasing spatial DoFs with a fixed small number of RF chains? When the user moves from the far field, to the hybrid field, and to the near field, how can we adopt the same transceiver hardware structure to make the scheme more suitable for mobile communications?  
In literature, there have been many studies on spatial modulation (SM) to reduce the number of RF chains and improve spectral efficiency. SM can be divided into two operating modes based on antenna switching \cite{Yang2017AdaptiveSF,Ishikawa2017GeneralizedSpatialModulationBased,He2017SpatialMF} and beamspace switching \cite{Wang2018TransmitSD,Ding2018BeamIM}. The former method cannot utilize the beamforming gain of all antennas, and the uniform activation scheme used in the latter is far from optimal since the good and bad channels are equiprobably activated. In our previous work \cite{Guo2019}, we proposed the basic concept of beamspace modulation (BM) and showed its superiority over  the best beamspace selection (BBS) or uniformly beamspace switching. We have also theoretically proved its capacity-achieving capabilities and superiority in a general MIMO channel with the number of spatial DoFs being greater than the number of RF chains in \cite{Guo2020}.   
How will BM perform in the near field region of XL-MIMO? Can it well explore the large and varying number of spatial DoFs in the near field region?

Motivated to answer these meaningful questions, this work has been done. In this paper, we propose the idea of using BM to exploit the increased DoFs in the near field region of XL-MIMO with a fixed small number of RF chains. Compared to the existing solution of only selecting the best beamspace for data transmission, beamspace hopping carries additional information. We theoretically analyze its superiority.
Both theoretical and simulation results show that the BM strategy greatly outperforms existing solutions over various communication distances.

\section{System Model}
\begin{figure*}[htpb]
       \centering
       \includegraphics[width=1\linewidth]{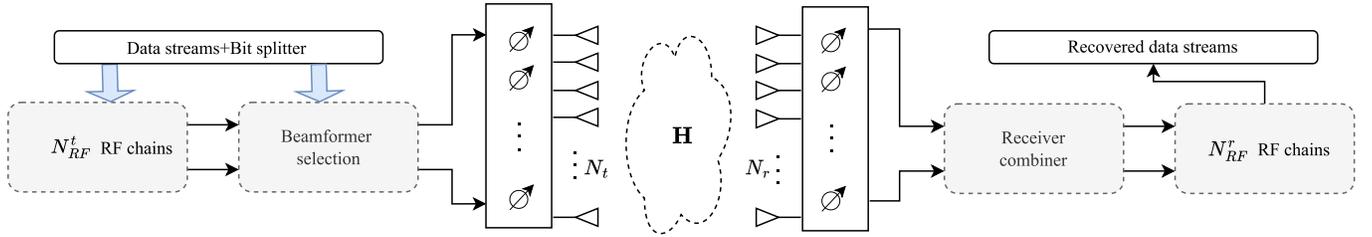}
       \caption{System model of a BM-based XL-MIMO communications systems}
       \label{Model}
\end{figure*}

In this paper, we consider the communication between transceivers with extremely large numbers of antennas. The numbers of transmitting and receiving antennas are denoted as $N_t$ and $N_r$, respectively.  The numbers of RF chains equipped at the transmitter and the receiver are represented by $N_{RF}^T$ and $N_{RF}^R$, respectively. Considering the power consumption of transmitting RF chains is much high, we assume only a few transmitting RF chains are equipped, leading to $N_{RF}^T$ being much smaller than $N_t$. Typically, $N_{RF}^T$ is set to be $1$ or $2$. As the receiving RF chains consume much less power. For simplicity, we assume $N_{RF}^R$ is equal to $N_r$.

\subsection{Channel Model and Spatial DoFs}
The channel model considered in this letter is a near-field two-ray channel model, which can be expressed as
\begin{equation}\label{eq1}
\mathbf{H}=\mathbf{H}_{\rm{LoS}}+\mathbf{H}_{\rm{NLoS}},
\end{equation}
where $\mathbf{H}_{\rm{LoS}}$ is the LoS component,  $\mathbf{H}_{\rm{NLoS}}$ is the first-order reflected path components of the scattering point.
The LoS component is given by
\begin{equation}\label{eq2}
\mathbf{H}_{\rm{LoS}}=\begin{bmatrix}
    \alpha_{11}e^{-j\frac{2\pi}{\lambda}r_{11}}&\cdots&\alpha_{1N_t}e^{-j\frac{2\pi}{\lambda}r_{1N_t}}\\
    \vdots&\ddots&\vdots\\
    \alpha_{N_r1}e^{-j\frac{2\pi}{\lambda}r_{N_r1}}&\cdots&\alpha_{N_rN_t}e^{-j\frac{2\pi}{\lambda}r_{N_rN_t}}
      \end{bmatrix},
\end{equation}
where $\alpha_{n_rn_t}$ and $r_{n_rn_t}$ represent the LoS channel gain and the distance between the $n_t$th transmitting antenna and the $n_r$-th receiving antenna, respectively. In (\ref{eq2}), the channel gain can be expressed as
\begin{equation}
    \alpha_{n_r,n_t}=\frac{\sqrt{G_tG_r}\lambda}{4\pi r_{n_r,n_t}},
\end{equation}
where $G_t$ and $G_r$ represent the antenna gains of the transmitting and receiving antennas, respectively. $\lambda$ stands for the wavelength of the carrier. The NLoS component is given by
\begin{equation}\label{eq4}
\mathbf{H}_{\rm{NLoS}}=\begin{bmatrix}
    \beta_{11}e^{-j\frac{2\pi}{\lambda}\tilde{r}_{11}}&\cdots&\beta_{1N_t}e^{-j\frac{2\pi}{\lambda}\tilde{r}_{1N_t}}\\
    \vdots&\ddots&\vdots\\
    \beta_{N_r1}e^{-j\frac{2\pi}{\lambda}\tilde{r}_{N_r1}}&\cdots&\alpha_{N_rN_t}e^{-j\frac{2\pi}{\lambda}\tilde{r}_{N_rN_t}}
      \end{bmatrix},
\end{equation}
where $\beta_{n_rn_t}$ and $\tilde{r}_{n_rn_t}$ represent the NLoS channel gain and the distance from the $n_t$-th transmit antenna to the point of scattering  to the $n_r$-th receive antenna, respectively.
In (\ref{eq4}), the channel gain can be expressed as
\begin{equation}
    \beta_{n_r,n_t}=\frac{\Gamma\sqrt{G_tG_r}\lambda}{4\pi \tilde{r}_{n_r,n_t}},
\end{equation}
where $\Gamma$ is the reflection coefficient.
%The two-ray model is considered because it is more practically relevant. In addition, it is a determinant that can easily be analyzed to gain insight.

In order to demonstrate the variation of the near field two-ray channel DoFs as the distance
between the transmitter and receiver varies, we consider a uniform linear array of 256 antennas
at half-wavelength intervals for both transmitter and receiver. We assume that large antenna
arrays at the transmitter and receiver sides are parallel. Moreover, we assume that there is a scattering point in the middle of the transceiver antenna that does not affect the LOS path and generates an NLOS path. 
Based on the channel given by (1), the number of non-zero singular values of the channel matrix $\mathbf{H}$ are counted as the spatial DoFs, i.e., $N_{\rm{DoF}}=\operatorname{Rank}\left\{\mathbf{H}\right\}$.

The spatial DoFs are depicted in Fig. \ref{DoFs}.  We illustrate the variation of DoFs at three frequencies of $5$ gigahertz (GHz), $30$ GHz, and $100$ GHz in Fig. 2. It is clear that as the distance between the transmitter and receiver decreases, the spatial DoFs increase rapidly. 

\setlength{\abovecaptionskip}{-0.05cm}
\begin{figure}[!t]
       \centering
       \includegraphics[width=1\linewidth]{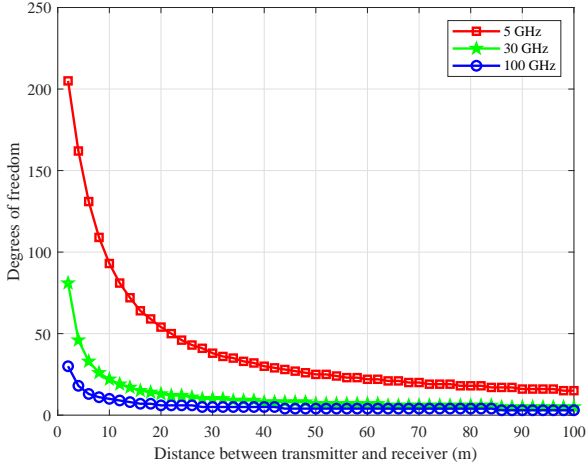}
       \caption{DoFs variance as the transceiver distance varies.}
       \label{DoFs}
       \vspace{-0.2cm}
\end{figure}

%In Fig. \ref{SVD}, we also demonstrate the singular values in the near-field region. It is shown that at the beginning the singular values of channels fall off slowly. However, when it is above a critical point close to the estimated DoFs, the singular values will drop sharply. And the closer the distance, the higher the DoFs, which has the same conclusion as in Fig. \ref{DoFs}.

% \begin{figure}[htpb]
%        \centering
%        \includegraphics[width=1\linewidth]{Singular_value.eps}
%        \caption{Singular values in the near field region, where the carrier frequency is $30$ GHz.}
%        \label{SVD}
% \end{figure}
\subsection{Existing Beamforming Strategy}
Even though spatial DoFs are sufficiently large, especially in the near field region, the number of spatial DoFs that can be used is limited by the number of transmit RF chains, i.e., $N_{RF}^T$. The conventional solution employs the beamformer $\mathbf{F}_{\rm{opt}}\in \mathbb{C}^{N_t\times N_{RF}^T}$ that corresponds to the strongest beamspace for data transmission, whose transmission model can be expressed by
\begin{equation}\label{eq10}
\mathbf{y}=\mathbf{H}\mathbf{F}_{\rm{opt}}\mathbf{s}+\mathbf{n},
\end{equation}
where $\mathbf{y}$ stands for the received signal; $\mathbf{s}$ represents the data symbol vector $\mathbf{s}\in\mathbb{C}^{N_{RF}^T\times 1}$ with $\mathbb{E}\{\mathbf{s}\mathbf{s}^H\}=\frac{1}{{N}_{RF}^T}\mathbf{I}_{N_{RF}^T}$ and $\mathbf{n}$ represents the complex Gaussian noise with zero mean and covariance $\sigma_n^2 \mathbf{I}_{N_r}$. 
The BBS solution of $\mathbf{F}_{\rm{opt}}$ consists of the right singular vectors corresponding to the largest $N_{RF}^T$ singular values. This indicates the best beamspace of dimension $N_{RF}^T$ is selected for data transmission.  The other $(N_{\rm{DoF}}-N_{RF}^T)$ DoFs are left unexplored.

\section{Beamspace Modulation and Its Capacity}
To employ all spatial DoFs, we introduce beasmspace modulation and analyze its capacity in the section. As illustrated in Fig. 1, the information bits are first split into two parts, one part for traditional symbol modulation, which maps to the data symbol vector $\mathbf{s}$, and the other part for beamformer selection. The beamformer candidates have to be independent of each other, which means a beamformer matrix should not be expressed by the linear combination of other beamformer matrices. There are a total $N_{\rm{DoF}}$ independent singular vectors, $N_{RF}^T$ of which are chosen to form a beamformer $\mathbf{F}_i$, leading to a total of $K=\left(N_{\rm{DoF}}\atop N_{RF}^T\right)$ beamformer candidates. Let
 $\mathcal{F}=\{\mathbf{F}_1,\mathbf{F}_2,\cdots,\mathbf{F}_{K}\}$ represent the beamformer set and $p_i$ denote the activating probability of $\mathbf{F}_i$. The transmission model can be expressed by 
 \begin{equation}\label{eq13}
 \mathbf{y}=\mathbf{H}\mathbf{F}_i\mathbf{s}+\mathbf{n},
 \end{equation}
where information bits are not only carried by $\mathbf{s}$ but also by $\mathbf{F}_i$. In this way, the spectral efficiency characterized by the mutual information $\mathcal{I}(\mathbf{F}_i,\mathbf{s};\mathbf{y})$ can be expressed by 
\begin{equation}\label{eq14}
\begin{split}\mathcal{I}(\mathbf{F}_i,\mathbf{s};\mathbf{y})=\mathcal{I}(\mathbf{s};\mathbf{y})+\mathcal{I}(\mathbf{F}_i;\mathbf{y}|\mathbf{s})\geq\mathcal{I}(\mathbf{s};\mathbf{y}),
\end{split}
\end{equation}
which implies that we can improve the mutual information limit by using $\mathbf{F}_i$ carrying information. In this letter, the set of beams is the set of $K$ candidate beams selected from the right singular vector of the channel $\mathbf{H}$, which means $\mathcal{F}$ is deterministic when the channel is known. Then SE $\mathcal{R}(\mathbf{p})$ of BM can be derived as 
\begin{equation}\label{eq15}
\begin{split}
\mathcal{R}(\mathbf{p})&=\mathcal{I}(\mathbf{F}_i,\mathbf{s};\mathbf{y})\\
&=\mathcal{H}(\mathbf{y})-\mathcal{H}(\mathbf{y}|\mathbf{F}_i,\mathbf{s})\\
&=\mathbb{E}\left[-\log_2 f(\mathbf{y})\right]-\mathcal{H}(\mathbf{n})\\
&=\int_{\mathbb{C}^{N_r}}f(\mathbf{y})\log_2 f(\mathbf{y}) d\mathbf{y}-N_r\log_2 (\pi e),
\end{split}
\end{equation}
where $\mathbf{p}=[p_1,p_2,\cdots,p_K]^T$; $\mathcal{H}(\cdot)$ represents the entropy function and $\mathbb{E}[\cdot]$ denotes the expectation. It is shown that the spectral efficiency is determined by the distribution of the received signal vector, i.e., $f(\mathbf{y})$. 
According to the Theorem 1 in \cite{Ibrahim2016OnTA}, the zero-mean Gaussian assumption of the symbol 
vector $\mathbf{s}$ fed into the precoder leads to that  the  received vector $\mathbf{y}$ is distributed as a complex Gaussian mixture model (GMM) with probability density function (PDF) being expressed as 
\begin{equation}\label{eq16}
f(\mathbf{y})=\sum_{i=1}^{K}p_i f_i(\mathbf{y}),
\end{equation}
where $p_i$ is the probability of $\mathbf{F}_i$ which satisfies the constraint  $\sum_{i=1}^K p_i=1$.
In (\ref{eq16}),
$f_i(\mathbf{y})$ is the probability of PDF of complex Gaussian distribution, which can be expressed as
\begin{equation}\label{eq17}
p(\mathbf{y}|\mathbf{F=F_i})=f_i(\mathbf{y})=\frac{1}{\pi^{N_r}\det(\Sigma_i)}\exp(-\mathbf{y}^H\Sigma_i^{-1}\mathbf{y}),
\end{equation}
and $\Sigma_i=\mathbf{I}_{N_r}+\frac{1}{N_{RF}^T\sigma_n^2}\mathbf{H}_i\mathbf{F}_i\mathbf{F}_i^H\mathbf{H}_i^H$.
\subsection{Problem Formulation}
Then we formulate the channel capacity  as 
\begin{equation}\label{eq18}
    \mathcal{C}_{\rm{BM}}=\max_{\sum_{i=1}^{K}p_i=1}{\mathcal{R}}{(\mathbf{p})}.
\end{equation}
It is difficult to solve the problem since its expression involves the integration of complex functions. Fortunately, the upper bound on the entropy of the real GMM random vector has been proved in \cite{Huber2008OnEA}, and we can use a similar derivation to obtain an upper bound on the entropy of the complex GMM random vector. Specifically, the upper bound on the differential entropy of the received complex GMM random vector y, which is the first term in (\ref{eq15}), is
\begin{equation}\label{eq19}
    \mathcal{H}(\mathbf{y})\leq\sum_{i=1}^{K} p_{i}\left(-\log p_{i}+\log \left[(\pi e)^{N_{r}} \operatorname{det}\left(\Sigma_{i}\right)\right]\right).
\end{equation}
Combine (\ref{eq15}) and (\ref{eq19}), an upper bound for the SE (denoted by $\widetilde{\mathcal{R}}(\mathbf{p})$) can be derived as
\begin{equation}\label{eq20}
\begin{split}
    \mathcal{R}(\mathbf{p})\le\widetilde{\mathcal{R}}(\mathbf{p})%=\sum_{i=1}^{K} p_{i}\left(-\log p_{i}+\log \left[(\pi e)^{N_{r}} \operatorname{det}\left(\Sigma_{i}\right)\right]\right)\\&-N_r\log(\pi e)\\
    &=-\sum_{i=1}^{K} p_{i} \log p_{i}+\sum_{i=1}^{K}\left(p_{i} \log \operatorname{det}\left(\Sigma_{i}\right)\right)\\&+N_{r} \log (\pi e)\sum_{i=1}^{K}p_{i}  -N_{r} \log (\pi e).\\
    \end{split}
\end{equation}
Following the fact that $\sum_{i=1}^K p_i=1$, we finally obtain 
\begin{equation}\label{eq21}
    \widetilde{\mathcal{R}}(\mathbf{p})=\sum_{i=1}^{K}p_i[\log_2 \det (\Sigma_i)-\log_2 p_i].
\end{equation}
 In \cite{Guo2020}, it has been proved that the upper bound in (\ref{eq21}) is tight in high signal-to-noise ratio (SNR) regimes. Therefore, we resort to optimizing the asymptotic spectral efficiency instead and formulate the optimization problem as
 \begin{equation}\label{eq22}
 \begin{split} \mathcal{C}_{\rm{BM}}^A&=\max_{\sum_{i=1}^{K}p_i=1}\widetilde{\mathcal{R}}{(\mathbf{p})}\\
&=\max_{\sum_{i=1}^{K}p_i=1} \sum_{i=1}^{K}p_i\left[\log_2 \det (\Sigma_i)-\log_2 p_i\right], 
\end{split}
 \end{equation}
 where $\mathcal{C}_{\rm{BM}}^A$ is the asymptotic capacity.
\subsection{Beamspace Activation Optimization} 
Problem (\ref{eq22}) is a typical constrained optimization problem that can be solved using the Lagrange multiplier method. We formulate the Lagrange multiplier for (\ref{eq22}) as
\begin{equation}\label{eq23}
     \mathcal{L}(\mathbf{p},\lambda)=\sum_{i=1}^{K}p_i\left[\log_2 \det (\Sigma_i)-\log_2 p_i\right]-\lambda\left(\sum_{i=1}^K p_i-1\right).
\end{equation}
Taking the derivation of $\mathcal{L}(\mathbf{p},\lambda)$ with respect to $p_i$ yields
 \begin{equation}\label{eq24}
 \log_2 \det(\Sigma_i)-\log_2 p_i -\frac{1}{\ln 2}-\lambda =0, ~i=1,2,\cdots, K.
 \end{equation}
 By solving (\ref{eq24}), we can obtain
\begin{equation}\label{eq25}
p_i=\frac{2^{-\lambda}\det(\Sigma_i)}{e},~i=1,2,\cdots,K.
\end{equation}
Owing to $\sum_{i=1}^K p_i=1$, we can derive the capacity-achieving beamformer activation probability as\footnote{From (\ref{eq26}), it is shown that the beamformer activation probabilities are non-equal. 
To transform independent and Bernoulli($1/2$) distributed input bits into a sequence of output symbols with a desired distribution,  distribution matching techniques \cite{Schulte2016} can be adopted.}
\begin{equation}\label{eq26}
p_i^*=\frac{p_i}{\sum_{i=1}^Kp_i}=\frac{\det(\Sigma_i)}{\sum_{i=1}^{K}\det(\Sigma_i)},~i=1,2,\cdots,K,
\end{equation}
Substituting (\ref{eq26}) into (\ref{eq21}), we can obtain the the corresponding asymptotic  capacity as\begin{equation}\label{eq27}
\begin{split}
\mathcal{C}_{\rm{BM}}^A&=\sum_{i=1}^{K}p_i^*[\log_2 \det (\Sigma_i)-\log_2 p_i^*]\\&=\sum_{i=1}^{K}p_i^*\log_2{\left[\sum_{i=1}^{K}\det(\Sigma_i)\right]}=\log_2{\left[\sum_{i=1}^{K}\det(\Sigma_i)\right]}.
\end{split}
\end{equation}

By comparing BM with the conventional transmission solution that selects the best beamspace for conveying information, which actually achieves the capacity as
\begin{equation}\label{eq28}
\mathcal{C}_{\rm{BBS}}=\log_2{\left[\max_{i=1,2,\cdots, K}\det(\Sigma_i)\right]},
\end{equation}
we can easily conclude that the BM solution theoretically outperforms
the BBS solution in spectral efficiency. Moreover, we can make the following corollary.
\begin{corollary}
The closer the transceiver is, the more significant the gain in spectral efficiency of BM compared to BBS. As the communication distance increases, BM will always outperform BBS in the high SNR regime as long as the number of  channel DoFs is higher than the number of RF chains.
\end{corollary}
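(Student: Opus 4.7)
The plan is to compute the capacity gap directly and then track its dependence on $N_{\rm{DoF}}$. Subtracting (\ref{eq28}) from (\ref{eq27}),
\[
\Delta \triangleq \mathcal{C}_{\rm{BM}}^A - \mathcal{C}_{\rm{BBS}} = \log_2\!\left[\frac{\sum_{i=1}^K \det(\Sigma_i)}{\max_{i}\det(\Sigma_i)}\right].
\]
Since $\Sigma_i = \mathbf{I}_{N_r} + \frac{1}{N_{RF}^T \sigma_n^2}\mathbf{H}\mathbf{F}_i\mathbf{F}_i^H\mathbf{H}^H$ is positive definite, every $\det(\Sigma_i)>0$, so $\Delta \geq 0$ with equality iff $K=1$.

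For the second statement, the identity $K=\binom{N_{\rm{DoF}}}{N_{RF}^T}$ gives $K\geq 2$ iff $N_{\rm{DoF}} > N_{RF}^T$. Under this condition the numerator contains at least two strictly positive determinants while the denominator equals only one of them, yielding $\Delta>0$. Because the asymptotic formulas (\ref{eq27}) and (\ref{eq28}) are tight in the high-SNR regime, this settles the claim that BM strictly outperforms BBS whenever the channel DoFs exceed the RF chains, regardless of communication distance.

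For the monotonicity (first) statement, invoke the SVD $\mathbf{H}=\mathbf{U}\mathbf{D}\mathbf{V}^H$: since $\mathbf{F}_i$ consists of $N_{RF}^T$ columns of $\mathbf{V}$, the columns of $\mathbf{H}\mathbf{F}_i$ are mutually orthogonal with squared norms equal to the selected squared singular values, and a Sylvester-type calculation gives
\[
\det(\Sigma_i) = \prod_{k\in\mathcal{S}_i}\!\left(1+\frac{\sigma_k^2}{N_{RF}^T \sigma_n^2}\right),
\]
where $\mathcal{S}_i$ indexes the selected singular values and $\sigma_1\geq\cdots\geq\sigma_{N_{\rm{DoF}}}$. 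Setting $a_k \triangleq 1+\sigma_k^2/(N_{RF}^T\sigma_n^2)>1$, the numerator inside $\Delta$ becomes the $N_{RF}^T$-th elementary symmetric function $e_{N_{RF}^T}(a_1,\ldots,a_{N_{\rm{DoF}}})$, while the denominator equals $\prod_{k=1}^{N_{RF}^T} a_k$. As the transceiver distance shrinks, Fig.~\ref{DoFs} shows that $N_{\rm{DoF}}$ grows, so the number of subsets $\mathcal{S}_i$ contributing to the numerator explodes combinatorially, with each new term contributing factors $a_k>1$ and strictly enlarging the ratio.

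The main obstacle is that the top $N_{RF}^T$ singular values in the denominator also change with distance, so one must rule out the possibility that their growth could cancel the combinatorial gain in the numerator. This is handled in the high-SNR regime, where $a_k \approx \sigma_k^2/(N_{RF}^T\sigma_n^2)$ and the ratio collapses to the scale-free quantity $e_{N_{RF}^T}(\sigma_1^2,\ldots,\sigma_{N_{\rm{DoF}}}^2)/(\sigma_1^2\cdots\sigma_{N_{RF}^T}^2)$, which depends only on the relative spread of the singular spectrum. Since shorter links both admit more non-negligible singular values and let the tail values catch up with the leading ones (a better-conditioned channel), both effects — more summands and larger per-summand contributions — reinforce one another and drive $\Delta$ monotonically upward, establishing the stated trend.
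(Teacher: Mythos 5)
Your proposal is correct, and it takes a genuinely different and considerably more explicit route than the paper. The paper's entire proof is a two-line qualitative limiting argument: increasing distance reduces $N_{\mathrm{DoF}}$, hence $K=\binom{N_{\mathrm{DoF}}}{N_{RF}^T}\to 1$, hence $\mathbf{F}_i=\mathbf{F}_{\rm opt}$ and $\mathcal{C}_{\mathrm{BM}}^A\approx\mathcal{C}_{\mathrm{BBS}}$ in the high-SNR regime; it never writes down the gap itself. You instead compute $\Delta=\log_2\bigl[\sum_{i}\det(\Sigma_i)/\max_i\det(\Sigma_i)\bigr]$ explicitly, reduce it via the SVD and Sylvester's identity to the ratio of the elementary symmetric polynomial $e_{N_{RF}^T}(a_1,\dots,a_{N_{\mathrm{DoF}}})$ to its leading term $\prod_{k\le N_{RF}^T}a_k$, and thereby obtain an airtight proof of the second claim: $\Delta>0$ strictly if and only if $K\ge 2$, i.e.\ $N_{\mathrm{DoF}}>N_{RF}^T$, which the paper only obtains in the contrapositive, limiting form. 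What your approach buys is a closed-form expression for the gain that exposes its combinatorial origin; what the paper's buys is brevity. On the first (monotonicity-in-distance) claim, you are candid that the simultaneous change of the singular spectrum is an obstacle, and your resolution (``both effects reinforce one another'') is heuristic rather than a proof; but the paper does not prove monotonicity either---it only establishes the far-distance endpoint $K\to 1$. Your formula at least isolates exactly what would have to be verified, namely that $e_{N_{RF}^T}(\sigma_1^2,\dots,\sigma_{N_{\mathrm{DoF}}}^2)/\prod_{k\le N_{RF}^T}\sigma_k^2$ grows as the link shortens, a statement about the near-field singular-value spread that neither you nor the authors establish analytically and that the paper in effect defers to the numerical evidence of Fig.~\ref{result2}.
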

\begin{proof}
According to the analysis in Section II-A, the increase in the distance leads to a decrease in DoFs. When the distance is far enough, the size of beamformer candidates $K\to 1$ and $\mathbf{F}_i=\mathbf{F}_{opt}$, then $\mathcal{C}_{\rm{BM}}^A\approx\mathcal{C}_{\rm{BBS}}$ in the high SNR regime.\\
\end{proof}

\subsection{Complexity Analysis}
The computational complexity of BM comes from the SVD and the beamspace activation probability optimization. The computational complexity of the SVD is $O\left(N_{t} N_{r} \min \left(N_{t}, N_{r}\right)\right)$. The computational complexity of beamspace activation probability optimization mainly comes from the computation of $\left\{\operatorname{det}\left(\Sigma_{i}\right)\right\}$ including matrix multiplication and determinant calculation, which is about $O\left(K N_{r}^{3}+K N_{t} N_{r} N_{RF}^T+K N_{r}^{2} N_{RF}^T\right)$.  Therefore, the aggregated computational complexity can be expressed as $O\left(N_{t} N_{r} \min \left(N_{t}, N_{r}\right)+K N_{r}^{3}+K N_{t} N_{r} N_{RF}^T+K N_{r}^{2} N_{RF}^T\right)$.

\section{Simulations and Discussions}
In this section,   simulations are conducted to validate the capacity improvement of BM. In the simulations, the carrier frequency is set to be $30$ GHz, and half-wavelength spaced parallel linear antenna arrays with $256$ antennas are equipped at the transceivers. The number of transmit RF chains are set to be $1$.
\begin{figure}[t]
       \centering
       \includegraphics[width=1\linewidth]{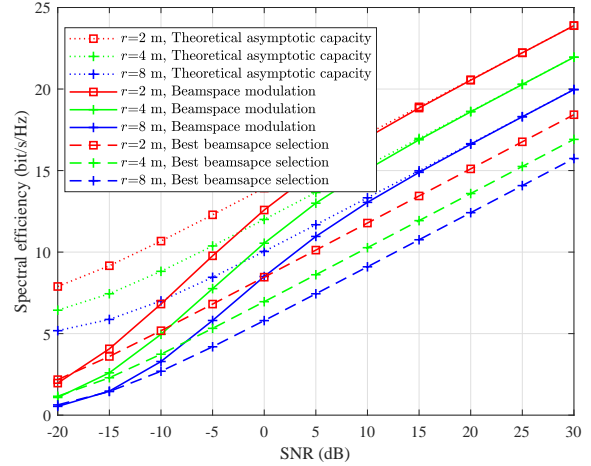}
       \caption{Spectral efficiency comparison between BM and BBS.} \label{result1}
       \vspace{-0.2cm}
\end{figure}

\begin{figure}[t] 
       \centering
       \includegraphics[width=1\linewidth]{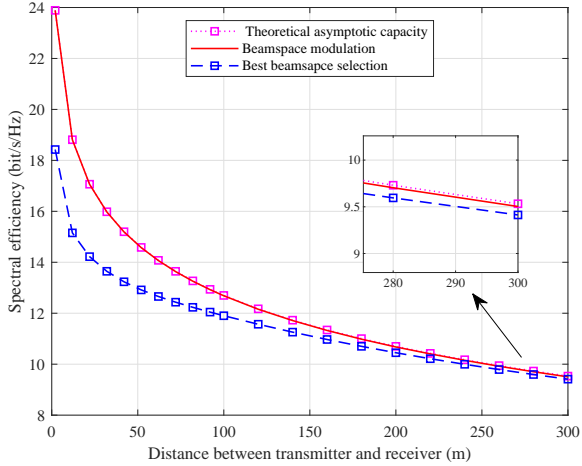}
       \caption{Spectral efficiency versus transceiver distances at an SNR of $30$ dB.}
       \label{result2}
       \vspace{-0.2cm}
\end{figure}
The spectral efficiency of BM is firstly simulated at the SNR from  $-20$ dB to $30$ dB in the near field region. Distances between the transmitter and receiver are set as $2$, $4$, and $8$ meters, respectively. All simulation results are illustrated in Fig. \ref{result1}. The real achieved spectral efficiency in (\ref{eq15}) is evaluated by utilizing Monte Carlo simulations to compute the expectation.  To verify the correctness of the analysis of the theoretical asymptotic capacity, we also the numerical results of (\ref{eq27}). Simulation results show that the asymptotic capacity matches the simulation results well in the high SNR regime above $20$ dB. For comparison, we also include the spectral efficiency of the conventional transmission solution, i.e., the BBS solution.  It is shown in  Fig. 3 that the spectral efficiency of BM outperforms that of BBS by around  $5.44$, $5.02$, and $4.15$ bits/s/Hz at the transceiver distance of $2$ meters, $4$ meters, and $8$ meters in high SNR regime, respectively. It is noteworthy that these considerable performance gains are achieved  without increasing any power or changing the hardware. Moreover, it is shown the performance gain goes higher as the receiver moves closer to the transmitter.

To show the detailed variation of the performance gain brought by BM along with the decrease in distance between the transmitter and receiver, we simulate the spectral efficiency of BM, its theoretical asymptotic capacity, and the spectral efficiency of BBS from  $2$ meters to $300$ meters at a high SNR of $30$ dB. Simulation results are illustrated in Fig. \ref{result2} It is shown that the performance gain is high in the near field region and gradually shrinks as the receiver moves further. Even though the performance gain shrinks, it is still significant (around $0.12$ bit/s/Hz at a distance of $300$ meters) in the far field region. Observing the results in Fig. \ref{result2}, we find that as the user moves from far field to hybrid field and to near field, it will always benefit from the BM. This is appealing for future communication networks utilizing mmWave and Terahertz bands.

\section{Conclusion}
In this paper, we proposed to utilize BM for capacity improvement in the near field region. The beamformers are activated with non-uniformly distributed probabilities. The closed-form activation probabilities were derived.
Given the limited number of RF chains, it was shown
that BM can achieve considerable performance improvement not only in the near field region but also in the far field region.
It is appealing that the capacity achievement is not at the expense of modifying the hardware infrastructure and may become a promising technology in 6G.

\bibliographystyle{IEEEtran} 
\bibliography{IEEEabrv,bib}

% Generated by IEEEtran.bst, version: 1.14 (2015/08/26)
\begin{thebibliography}{10}
\providecommand{\url}[1]{#1}
\csname url@samestyle\endcsname
\providecommand{\newblock}{\relax}
\providecommand{\bibinfo}[2]{#2}
\providecommand{\BIBentrySTDinterwordspacing}{\spaceskip=0pt\relax}
\providecommand{\BIBentryALTinterwordstretchfactor}{4}
\providecommand{\BIBentryALTinterwordspacing}{\spaceskip=\fontdimen2\font plus
\BIBentryALTinterwordstretchfactor\fontdimen3\font minus
  \fontdimen4\font\relax}
\providecommand{\BIBforeignlanguage}[2]{{%
\expandafter\ifx\csname l@#1\endcsname\relax
\typeout{** WARNING: IEEEtran.bst: No hyphenation pattern has been}%
\typeout{** loaded for the language `#1'. Using the pattern for}%
\typeout{** the default language instead.}%
\else
\language=\csname l@#1\endcsname
\fi
#2}}
\providecommand{\BIBdecl}{\relax}
\BIBdecl

\bibitem{dang2020should}
S.~Dang, O.~Amin, B.~Shihada, and M.-S. Alouini, ``What should 6{G} be?''
  \emph{Nat. Electron.}, vol.~3, no.~1, pp. 20--29, Jan. 2020.

\bibitem{Selvan2017FraunhoferAF}
K.~T. Selvan and R.~Janaswamy, ``Fraunhofer and fresnel distances : Unified
  derivation for aperture antennas.'' \emph{{IEEE} Antennas Propag. Mag.},
  vol.~59, pp. 12--15, Aug. 2017.

\bibitem{Cui2022NearFieldCF}
M.~Cui, Z.~Wu, Y.~Lu, X.~Wei, and L.~Dai, ``Near-field {MIMO} communications
  for 6g: Fundamentals, challenges, potentials, and future directions,''
  \emph{IEEE Commun. Mag.}, vol.~61, no.~1, pp. 40--46, Jan. 2023.

\bibitem{Cui2021ChannelEF}
M.~Cui and L.~Dai, ``Channel estimation for extremely large-scale {MIMO}:
  Far-field or near-field?'' \emph{{IEEE} Trans. Commun.}, vol.~70, pp.
  2663--2677, Aug. 2021.

\bibitem{Han2019ChannelEF}
Y.~Han, S.~Jin, C.-K. Wen, and X.~Ma, ``Channel estimation for extremely
  large-scale massive {MIMO} systems,'' \emph{{IEEE} Wireless Commun. Lett.},
  vol.~9, pp. 633--637, Oct. 2019.

\bibitem{Cui2021NearFieldCE}
M.~Cui and L.~Dai, ``Near-field channel estimation for extremely large-scale
  {MIMO} with hybrid precoding,'' \emph{2021 IEEE Global Communications
  Conference (GLOBECOM)}, pp. 1--6, Dec. 2021.

\bibitem{Wei2022ChannelEF}
X.~Wei and L.~Dai, ``Channel estimation for extremely large-scale {Massive}
  {MIMO}: Far-field, near-field, or hybrid-field?'' \emph{{IEEE} Commun.
  Lett.}, vol.~26, pp. 177--181, Jan. 2022.

\bibitem{Lu2022NearFieldMA}
H.~Lu and Y.~Zeng, ``Near-field modeling and performance analysis for
  multi-user extremely large-scale {MIMO} communication,'' \emph{{IEEE} Commun.
  Lett.}, vol.~26, pp. 277--281, Feb. 2022.

\bibitem{Wu2022MultipleAF}
Z.~Wu, M.~Cui, and L.~Dai, ``Multiple access for near-field communications:
  {SDMA} or {LDMA}?'' \emph{arXiv:2208.06349}, Aug. 2022.

\bibitem{Zhang2022BeamFF}
H.~Zhang, N.~Shlezinger, F.~Guidi, D.~Dardari, M.~F. Imani, and Y.~C. Eldar,
  ``Beam focusing for near-field multiuser {MIMO} communications,''
  \emph{{IEEE} Trans. Wireless Commun.}, vol.~21, pp. 7476--7490, May 2022.

\bibitem{Wu2022}
Z.~Wu, M.~Cui, Z.~Zhang, and L.~Dai, ``\BIBforeignlanguage{en}{Distance-aware
  precoding for near-field capacity improvement in {XL}-{MIMO}},'' in
  \emph{\BIBforeignlanguage{en}{2022 {IEEE} 95th {Vehicular} {Technology}
  {Conference}: ({VTC2022}-{Spring})}}.\hskip 1em plus 0.5em minus 0.4em\relax
  Helsinki, Finland: IEEE, June. 2022, pp. 1--5.

\bibitem{7397861}
X.~Yu, J.-C. Shen, J.~Zhang, and K.~B. Letaief, ``Alternating minimization
  algorithms for hybrid precoding in millimeter wave {MIMO} systems,''
  \emph{{IEEE} J. Sel. Topics Signal Process.}, vol.~10, no.~3, pp. 485--500,
  Apr. 2016.

\bibitem{Yang2017AdaptiveSF}
P.~Yang, Y.~Xiao, Y.~L. Guan, Z.~Liu, S.~Li, and W.~Xiang, ``Adaptive {SM-MIMO}
  for mmwave communications with reduced {RF} chains,'' \emph{{IEEE} J. Sel.
  Areas Commun.}, vol.~35, pp. 1472--1485, Jul. 2017.

\bibitem{Ishikawa2017GeneralizedSpatialModulationBased}
N.~Ishikawa, R.~Rajashekar, S.~Sugiura, and L.~Hanzo,
  ``Generalized-spatial-modulation-based reduced-{RF}-chain millimeter-wave
  communications,'' \emph{{IEEE} Trans. Veh. Technol.}, vol.~66, pp. 879--883,
  Jan. 2017.

\bibitem{He2017SpatialMF}
L.~He, J.~Wang, and J.~Song, ``Spatial modulation for more spatial
  multiplexing: {RF}-chain-limited generalized spatial modulation aided mm-wave
  {MIMO} with hybrid precoding,'' \emph{{IEEE} Trans. Commun.}, vol.~66, pp.
  986--998, Mar. 2017.

\bibitem{Wang2018TransmitSD}
W.~Wang and W.~Zhang, ``Transmit signal designs for spatial modulation with
  analog phase shifters,'' \emph{{IEEE} Trans. Wireless Commun.}, vol.~17, pp.
  3059--3070, May 2018.

\bibitem{Ding2018BeamIM}
Y.~Ding, V.~F. Fusco, A.~P. Shitvov, Y.~Xiao, and H.~Li, ``Beam index
  modulation wireless communication with analog beamforming,'' \emph{{IEEE}
  Trans. Veh. Technol.}, vol.~67, pp. 6340--6354, Jul. 2018.

\bibitem{Guo2019}
S.~Guo, H.~Zhang, P.~Zhang, P.~Zhao, L.~Wang, and M.-S. Alouini, ``Generalized
  beamspace modulation using multiplexing: A breakthrough in {mmWave} {MIMO},''
  \emph{{IEEE} J. Sel. Areas Commun.}, vol.~37, no.~9, pp. 2014--2028, Sep.
  2019.

\bibitem{Guo2020}
S.~Guo, H.~Zhang, and M.-S. Alouini, ``Asymptotic capacity for {MIMO}
  communications with insufficient radio frequency chains,'' \emph{{IEEE}
  Trans. Commun.}, vol.~68, no.~7, pp. 4190--4201, Jul. 2020.

\bibitem{Ibrahim2016OnTA}
A.~A.~I. Ibrahim, T.~Kim, and D.~J. Love, ``On the achievable rate of
  generalized spatial modulation using multiplexing under a {G}aussian mixture
  model,'' \emph{{IEEE} Trans. Commun.}, vol.~64, pp. 1588--1599, Apr. 2016.

\bibitem{Huber2008OnEA}
M.~F. Huber, T.~Bailey, H.~F. Durrant-Whyte, and U.~D. Hanebeck, ``On entropy
  approximation for gaussian mixture random vectors,'' \emph{2008 IEEE
  International Conference on Multisensor Fusion and Integration for
  Intelligent Systems}, pp. 181--188, Oct. 2008.

\bibitem{Schulte2016}
P.~Schulte and G.~Böcherer, ``Constant composition distribution matching,''
  \emph{IEEE Trans. Inf. Theory}, vol.~62, no.~1, pp. 430--434, Jan. 2016.

\end{thebibliography}
\end{document}